\newcommand{\SN}{\mathrm{SN}}
\newcommand{\II}{\Pi}
\newcommand{\NN}{\mathbf{N}}
\newcommand{\LL}{\Lambda}
\newtheorem{theorem}{Theorem}
\newtheorem{corollary}[theorem]{Corollary}
\newtheorem{definition}[theorem]{Definition}
\newtheorem{example}[theorem]{Example}
\newtheorem{lemma}[theorem]{Lemma}
\newtheorem{proposition}[theorem]{Proposition}
\newtheorem{remark}[theorem]{Remark}
\newenvironment{proof}[1][Proof]{\noindent\textbf{#1.} }{\ \rule{0.5em}{0.5em}}
\newcommand{\bpartial}{\mathop{\partial\kern -4pt\raisebox{.8pt}{$|$}}}
\newcommand{\bra}{\mathopen{[\kern-1.6pt[}}
\newcommand{\ket}{\mathclose{]\kern-1.5pt]}}
\newcommand{\bbra}{\mathopen{[\kern-2.2pt[\kern-2.3pt[}}
\newcommand{\bket}{\mathclose{]\kern-2.1pt]\kern-2.3pt]}}
\begin{document}
\title {\large{\bf Right-invariant Poisson–Nijenhuis structures on Lie groupoids:\\ Correspondence and Classification}}
\vspace{2mm}
\author {\small{ \bf  Ghorbanali Haghighatdoost}\hspace{-1mm}\\
{\small{\em Department of Mathematics,Azarbaijan Shahid Madani University, Tabriz, Iran}}\\
           }
 \maketitle
 
\begin{abstract}
In this paper, we introduce right-invariant Poisson-Nijenhuis  Structures on Lie groupoids and their infinitesimal counterparts as called $(\wedge , \mathbf{n})-$structures. Also, we present a one-to-one correspondence between $(\LL  ,\mathbf{n})-$structures on Lie algebroids with Poisson-Nijenhuis structures $(\Pi , \mathbf{N})$ on their Lie groupoids under certian conditions. Also, we give some illustrative examples .
\end{abstract}

\noindent\textbf{Keywords:} Right-invariant Poisson--Nijenhuis structures; Lie groupoids; Lie algebroids; Nijenhuis operators.\\
\textbf{AMS Subject Classifications (2020):} 53D17; 22A22; 58H05; 53D18; 37K10.
\section{Introduction}

As it is mentioned in the prologue of \cite{Kiril0Mackenzie}, groupoids possess many of the features which give groups their power and importance, and the modern concept of Lie groupoid is as much entitled as is the familiar concept of Lie group to be regarded as the rigorous formulation of the $19^{\text{th}}$ century notion which went under the vague term ``continuous group of local transformations''; a case could be made that the modern concept of Lie group has been a transitional stage in the evolution of the notion of Lie groupoid.

Recently, many researchers working on geometric structures on Lie groupoids and try to extend known methods on Lie groups to Lie groupoids. Poisson–Nijenhuis (P-N) structures on manifolds were introduced by Magri and Morosi \cite{Magri} (see also \cite{Kosmann3}), as Poisson and Nijenhuis structures which are compatible in a sense.

In \cite{Ravanpak, Haghighat} we studied right-invariant Poisson–Nijenhuis (Poisson–quasi-Nijenhuis) structures on a Lie group $G$ and introduced their infinitesimal counterpart, the so-called $r$-$n$ ($r$-$qn$) structures on the corresponding Lie algebra $\mathfrak{g}$.

In line with our previous works \cite{Ravanpak, Haghighat}, here we study Poisson–Nijenhuis structures on a Lie groupoid $G \rightrightarrows M$ and define right-invariant Poisson–Nijenhuis structures on it, calling them right-invariant PN-structures on the Lie groupoid $G \rightrightarrows M$. We also define the infinitesimal counterpart of this, called algebraic structures corresponding to Poisson–Nijenhuis Lie groupoids.

We emphasize that this work is concerned with continuous Poisson–Nijenhuis structures on Lie groupoids, which is distinct from discrete Lagrangian or Hamiltonian mechanics on groupoids..

In section 2 we review some preliminary definitions and theorems about Poisson–Nijenhuis manifolds and groupoids. Moreover, we recall basic concepts about Lie groupoids and Lie algebroids (for more details refer to \cite{Magri, Kosmann, Kosman2, Kiril0Mackenzie}). In subsections 2.1.1 and 2.1.2 we verify the tangent and the cotangent bundle of a Lie groupoid as new Lie groupoids and introduce their Lie groupoid structures.

In section 3 we define right-invariant Poisson–Nijenhuis structures on Lie groupoids and prove that right-invariant $(\Pi, \mathbf{N})$-structures on Lie groupoids are in one-to-one correspondence with $(\Lambda, n)$-structures on their Lie algebroids.

By some illustrative examples we end this work in section 4.
\section{ Main Concepts }
\subsection{ Lie groupoid and Lie algebroid }

As it is well-known,  a  groupoid  which is  denoted by $G \rightrightarrows M$, consists of two sets $G$ and $M$ together with structural mappings $\alpha, \beta, 1, \iota$ and $m,$  where source mapping $\alpha: G \rightarrow M,$ target mapping $\beta:G \longrightarrow M,$ unit mapping $1: M \longrightarrow G,$ inverse mapping $\iota : G \longrightarrow G $ and multiplication mapping $ m: G_{2} \longrightarrow G $ which $ G_{2} = \lbrace (g,h) \in G \times G ~\vert ~ \alpha (g) = \beta (h) \rbrace $ is a subset of $G \times G.$\\

A Lie groupoid is a groupoid $G \rightrightarrows M$ for which $G$ and $M$ are smooth manifolds, $\alpha, \beta, 1, \iota$ and $m$ are differentiable mappings and besides, $\alpha,\beta$ are differentiable submersions.

The right translations $R_g: G_{\beta(g)}=\alpha^{-1}(\beta(g))\to G_{\alpha(g)}=\alpha^{-1}(\alpha(g))$ on a Lie groupoid $G$ over $M$, are diffeomorphisms of the $\alpha$-fibers only and not of the whole groupoid. 

A smooth mapping $X: G\to TG$ is called a vector field on $G$, i.e. for every $g\in G$, $X(g)\in TG$. Where $TG$ is tangent bundle of $G$.

According to the above, to talk about right- invariant vector fields on $G$, we have to restrict attention to those vector fields which are tangent to the $\alpha$-fibers. 
In other words, we take the elements of the sections of the sub-bundle $T^\alpha G$ of $TG$ defined $$T^\alpha G=\mathrm{Ker}(d\alpha)\subset TG.$$

A Lie algebroid A over a manifold M is a vector bundle  $\tau : A \longrightarrow M$ with the following items:
\begin{enumerate}
\item A Lie bracket $[\vert ~,~ \vert] $ on the space of smooth sections $ \Gamma( \tau),$
$$[\vert ~,~ \vert]: \Gamma (\tau) \times \Gamma (\tau) \longrightarrow \Gamma (\tau), \quad (X , Y ) \longmapsto [\vert X , Y \vert].$$
\item A morphism of vector bundles $\rho : A \longrightarrow TM,$ called the anchor map, where $TM$ is the tangent bundle of $M,$
such that the anchor and the bracket satisfy the following Leibniz rule:
$$[\vert X , f Y \vert] = f [\vert X , Y \vert] + \rho (X) (f) Y,$$
where $X , Y \in \Gamma (\tau)$, $~f \in C^{\infty} (M)$ and $\rho (X) f$ is the derivative of $f$  along the vector field $\rho (X).$
\end{enumerate}
Given a Lie groupoid $G$ over $M$, we define the vector bundle $A=Lie(G)=AG$, whose fiber at $x\in M$ coincide with the tangent space at the unit $1_x$ of the $\alpha$- fiber at $x$. In other words, $AG:=T^\alpha G|_M$.

It is easy to see that every fiber of the sub-bundle $T^\alpha G$  at an arrow  $h: y\to z$ of $G$ is $T^{\alpha}_h G= T_hG(y,-)$,  where  $G(y,-)=\alpha^{-1}(y) = G_{\alpha(h)}$. Consider the right translation $R_g:G(y, -) \to G(x, -), g' \to g'g$.
The differential of the right translation by $g$ induces a map $ dR_g: T^\alpha_h G \to T^\alpha_{hg} G$.
\begin{definition}
Vector field $X$ on $G$ is called vertical if it is vertical with respect to $\alpha$, that is, $X_g\in T_gG_{\alpha(g)}$, for all $g\in G$. We call  $X$  right- invariant on $G$ if it is vertical and $X_{gh}=dR_g(X_h)$, for all $(h,g)\in G_{(2)}$. 
\end{definition}
\begin{remark}
It is easy to show that  $\Gamma(AG)$ - the space of sections of vector bundle $AG$ can be identified the space of right- invariant vector field on $G$.  So if we denote  the space of right- invariant vector field on $G$ by 
$$\chi^\alpha_{\mathrm{inv}}(G)=\{X\in \Gamma(T^\alpha G): \overrightarrow{X}_{hg}=dR_g(X_h), \;(h, g)\in G_{(2)}\}.$$
From above we have the space of sections  $\Gamma(AG)$ is isomorphic to the space of right- invariant vector fields on $G$, $\chi^\alpha_{\mathrm{inv}}(G)$. On the other hand, the space $\chi^\alpha_{\mathrm{inv}}(G)$ is a Lie sub-algebra of the Lie algebra $\chi(G)$ of vector field on $G$ with respect to the usual Lie bracket of vector fields. Also the pull-back of vector field on the $\alpha-$fibers along $R_g$, preserves brackets. So we obtain a new bracket on $\Gamma(AG)$ which is uniquely determined. The Lie bracket on $AG$ is the Lie bracket on  $\Gamma(AG)$ obtained from the Lie bracket on $\chi^\alpha_{\mathrm{inv}}(G)$. The anchor of $AG$ is the differential of the target mapping $\beta$, i.e. $\rho=T\beta|_{AG}: AG \to TM$. As a result, we obtain that $AG$ is a Lie algebroid associated to the Lie groupoid $G$.
\label{SVB}
\end{remark}
According to the remark  \ref{SVB} we have following lemma:
\begin{lemma}\label{lem0}
Right- invariant vector fields on the Lie groupoid $G \rightrightarrows M$ are identified with the sections of Lie algebroid $AG$.
\end{lemma}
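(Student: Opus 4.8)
The plan is to make explicit the identification already sketched in Remark \ref{SVB}, that is, to exhibit a pair of mutually inverse linear maps between $\Gamma(AG)$ and $\chi^\alpha_{\mathrm{inv}}(G)$. In one direction, given $X\in\Gamma(AG)$ — so that $X_x\in AG_x=T^\alpha_{1_x}G$ for every $x\in M$ — I would define a vector field $\widetilde X$ on $G$ by
$$\widetilde X_g:=dR_g\big(X_{\beta(g)}\big),\qquad g\in G .$$
This is well posed: $R_g$ is a diffeomorphism of the $\alpha$-fibre $\alpha^{-1}(\beta(g))$ onto $\alpha^{-1}(\alpha(g))$, this fibre contains the unit $1_{\beta(g)}$, and $R_g(1_{\beta(g)})=1_{\beta(g)}g=g$; hence $dR_g\colon T^\alpha_{1_{\beta(g)}}G\to T^\alpha_gG$ carries $X_{\beta(g)}$ into $T^\alpha_gG$, so $\widetilde X$ is $\alpha$-vertical. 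In the opposite direction, given $Y\in\chi^\alpha_{\mathrm{inv}}(G)$ I would restrict it along the unit map, setting $(\mathrm{res}\,Y)_x:=Y_{1_x}\in T^\alpha_{1_x}G=AG_x$, which is a section of $AG$.

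Then I would check the three points that turn this into an identification. \textbf{(i) $\widetilde X$ is right-invariant.} For $(h,g)\in G_{(2)}$ one has $\beta(hg)=\beta(h)$ and, on the relevant $\alpha$-fibre, the composition identity $R_{hg}=R_g\circ R_h$ (since $(g'h)g=g'(hg)$); hence
$$\widetilde X_{hg}=dR_{hg}\big(X_{\beta(h)}\big)=dR_g\big(dR_h(X_{\beta(h)})\big)=dR_g\big(\widetilde X_h\big),$$
so $\widetilde X\in\chi^\alpha_{\mathrm{inv}}(G)$. \textbf{(ii) The two maps are mutually inverse.} Because $R_{1_x}$ is the identity diffeomorphism of $\alpha^{-1}(x)$, evaluating at $1_x$ gives $\widetilde X_{1_x}=X_x$, i.e. $\mathrm{res}\,\widetilde X=X$; conversely, applying the right-invariance of $Y$ to the composable pair $(1_{\beta(g)},g)$ yields $Y_g=Y_{1_{\beta(g)}g}=dR_g\big(Y_{1_{\beta(g)}}\big)=dR_g\big((\mathrm{res}\,Y)_{\beta(g)}\big)=\widetilde{(\mathrm{res}\,Y)}_g$. \textbf{(iii) Linearity.} Each $dR_g$ is linear, so both maps are $\mathbb{R}$-linear; in fact they are $C^\infty(M)$-linear once $C^\infty(M)$ is made to act on $\chi^\alpha_{\mathrm{inv}}(G)$ through pull-back by $\beta$, since $\widetilde{fX}_g=f(\beta(g))\widetilde X_g$.

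The only step that requires genuine care is the smoothness of $\widetilde X$ as a section of $T^\alpha G$. I would dispatch it by observing that $\widetilde X$ is assembled from smooth data — the maps $\beta$, $1$, $X$ and the multiplication $m$, the translations $R_g$ depending smoothly on $g$ precisely because $m$ is smooth — while $T^\alpha G=\mathrm{Ker}(d\alpha)$ is a genuine vector sub-bundle of $TG$ because $\alpha$ is a submersion, so $g\mapsto\widetilde X_g$ is a smooth section. I expect the main friction to be purely bookkeeping with $\alpha$-fibres, units and the order of translation; there is no deep idea beyond the composition rules $R_{hg}=R_g\circ R_h$ and $R_{1_x}=\mathrm{id}$. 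This establishes the bijection $\Gamma(AG)\cong\chi^\alpha_{\mathrm{inv}}(G)$, which is precisely the statement of the lemma; the fact that it intertwines the Lie brackets and the anchor maps has already been recorded in Remark \ref{SVB}, so nothing further is needed here.
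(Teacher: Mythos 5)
Your proposal is correct and follows exactly the identification the paper itself uses: the paper states the lemma as an immediate consequence of Remark \ref{SVB} and the formula $\overrightarrow{X}(g)=dR_g(X_{\beta(g)})$, and your two mutually inverse maps (extension by right translation and restriction along the unit section) are precisely that correspondence, just with the verifications of verticality, right-invariance via $R_{hg}=R_g\circ R_h$, and the inverse property written out in full. Note only that the paper's Definition 1 writes $X_{gh}=dR_g(X_h)$, an apparent typo for the convention $X_{hg}=dR_g(X_h)$ used in Remark \ref{SVB}, which is the one you correctly follow.
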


\subsubsection{ Tangent Lie groupoid}

Let $G\rightrightarrows M$ be a Lie groupoid with the structural mappings  $\alpha, \beta, 1, \iota$ and $m,$ . Appling the tangent functor to each of the maps defining $G$ yields a Lie groupoid structure on $TG$  with base $TM$, source $T\alpha$ and target $T\beta$, and multiplication $Tm: TG_2\to TG$. The identity at $X \in TM$ is $T1(X)$. This is the tangent prolongation groupoid of $G \rightrightarrows M$, usually is omitted the word prolongation.

The multiplication $Tm$ is defined as follows:
$$ AB=Tm(A, B)=TL_\delta(B)+ TR_\tau(A)-TL_\delta TR_\tau(T1(w)),$$
where $\delta$ and $\tau$ are any (local) bisections of $G$ for which $\delta(\alpha (g))=g$ and $\tau(\delta h)=h$  for all $g, h \in G.$         
$A\in T_h(G), B\in T_g(G)$ and $T\alpha(A)=T\beta(B)=w.$
It is worth to say, a bisection of the Lie groupoid $G\rightrightarrows M$ is a smooth map $\delta: M \to G$ which is right-inverse to $\alpha: G \to M$ and is such that $\beta \circ \delta: M \to M$ is a diffeomorphism. Also for a taken bisection $\delta: M \to G$. The right- translation defined by $\delta$ is 
$$ R_\delta: G \to G \quad, \quad g \to g\delta((\beta\circ \delta)^{-1}(\alpha (g)))$$
and the left-translation defined by $\delta$ is 
$$ L_\delta: G \to G \quad, \quad g \to \delta(\beta (g))g$$

\subsubsection{ Cotangent Lie groupoid} \label{cotangent}

The cotangent bundle of a Lie groupoid $G\rightrightarrows M$ also carries a natural Lie groupoid structure $T^\ast G\rightrightarrows A^\ast G$, where $A^\ast G$ is the dual vector bundle of $AG.$ For $a\in \Gamma(AG),$ we denote by $\overrightarrow a$  the corresponding right invariant vector field on $G$ and by $\overleftarrow a$ the corresponding left invariant vector field on $G.$ In other words, $\overleftarrow a(g)= TL_g(T\iota(X(\alpha (g))))$ and 
$\overrightarrow a(g)=TR_g(X(\beta(g)))$ define the left invariant and right invariant vector filed, respectively.
$$\tilde{\alpha}: T^\ast G\to A^\ast G, \quad \tilde{\alpha}(\theta)(a)=\theta(\overleftarrow{a}(g)), \quad   \theta \in T^{\ast} _g G, \quad  a\in A_{\alpha(g)} G,$$
$$\tilde{\beta}: T^\ast G\to A^\ast G, \quad \tilde{\beta}(\theta)(b)=\theta(\overrightarrow{b}(g)), \quad   \theta \in T^{\ast} _g G, \quad  b\in A_{\beta(g)} G,$$
$$\tilde{1}: A^\ast G \to T^\ast G , \quad \tilde{1}(\mu_x)(v)=\mu_x(v-T(1\circ \alpha)(v)), \quad   \mu_x \in A^\ast_x G,$$
for $v\in T_{1_x} G,$ $v-T(1\circ \alpha)(v)$ is an element of $A_xG \subset T_{1_x}G,$ so this tangent vector is indeed tangent to $\alpha$-fibers since $T\alpha$ of it is $0.$

$$\tilde{m}: T^\ast G_2 \to T^\ast G, \quad (\theta_1, \theta_2)(Tm(X_1, X_2)=\theta_1(X_1)+\theta_2(X_2),$$
where $\theta_1\in T^\ast_{g_1}G$, $\theta_2\in T^\ast_{g_2}G$  such that  $\tilde{\alpha}(\theta_1)=\tilde{\beta}(\theta_2)$ and  $X_1\in T_{g_1}G$ and $X_2\in T_{g_2}G$.
$$\tilde{i}: T^\ast G\to T^\ast G, \quad \tilde{i}(\theta)(a)=\theta(\tau i(a))$$

\subsection{Poisson-Nijenhuis Lie groupoid}

\begin{definition}\label{NT}
Let $M$ be a smooth manifold and $N:TM \to TM$ be a vector valued $1-$form, or a $(1,1)-$tensor on $M$. Then its Nijenhuis torsion $\tau N$ is a vector valued $2-form$ defined by 
\begin{equation}
\tau N(X,Y):=[NX,NY ]-N([NX,Y]+[X,NY ]-N[X,Y ]), \; for\;  X,Y\in \Gamma(TM).
\end{equation}
An $(1,1)$-tensor $N$ is called a Nijenhuis tensor if its Nijenhuis torsion vanishes.
\end{definition}
Given a Nijenhuis tensor $N$, one can define a new Lie algebroid structure on $TM$ deformed by $N$. We denote this Lie algebroid by $(TM)_N$. The bracket $[ , ]_N$ and anchor $\rho_N$ of this deformed Lie algebroid are given by
\begin{equation}
[X,Y]_N=[NX,Y]+[X,NY ]-N[X,Y ]\; \; \text{and}\;\;  \rho_N=N, \;\forall \;X,Y\in \Gamma(TM).
\end{equation}
Let $M$ be a smooth manifold and $\Pi \in \Gamma (\wedge^2 TM)$ a bivector field on $M$. Then one can define a skew-symmetric bracket $[, ]_\Pi$ on the space $\Omega^1(M)$ of $1$-forms on $M$ (one can see \cite{Vai} or \cite{Kos}), given by
\begin{equation}
[\alpha,\beta]_\Pi :=\mathcal{L}_{\Pi^{\sharp}\alpha}\beta-\mathcal{L}_{\Pi^{\sharp}\beta}\alpha-d(\Pi(\alpha,\beta)), \; \forall\; \alpha,\beta \in \Gamma(T^{\ast}M),
\end{equation}
 where $\Pi^{\sharp}:T^\ast M \to TM$, $\alpha \to \Pi(\alpha , -)$ is the bundle map induced by $\Pi$. 
 \begin{definition}
  A bivector $\Pi$ is called a Poisson if 
 \begin{equation}\label{PBV}
 [\Pi,\Pi]_{SN}=0,
 \end{equation}
 where $[\cdot,\cdot]_{\SN}$ is the Schouten–Nijenhuis bracket.\\
 If $\Pi$ ia Poisson bivector, then the cotangent bundle $T^\ast M$ with the above bracket and the bundle map $\Pi^{\sharp}$ forms a Lie algebroid. We call this Lie algebroid as the cotangent Lie algebroid of the Poisson manifold $(M,\Pi)$ and is denoted by $(T^\ast M)_\Pi$. $(T^\ast M)_\Pi$ knows everything about $(M,\Pi)$ . 
\end{definition}
\begin{definition}\label{PNM}
A Poisson-Nijenhuis manifold is a manifold $M$ together with  a Poisson bivector $\Pi \in \Gamma(\wedge^2 TM)$ and a Nijenhuis tensor $N$ such that  they are compatible in the following senses:
\begin{itemize}
\item $N \circ \Pi^{\sharp}=\Pi^{\sharp} \circ N^\ast$ (thus, $N \circ \Pi^{\sharp}$ defines a bivector field $N \Pi$ on $M$),
\item $C(\Pi,N)\equiv 0$,
\end{itemize}
where
\begin{equation}
C(\Pi,N)(\alpha,\beta):=[\alpha,\beta]_{N\Pi}-([N^\ast \alpha,\beta]_\Pi+[\alpha,N^\ast \beta]_\Pi-N^\ast[\alpha,\beta]_\Pi),\; \text{for}\; \alpha, \beta \in \Omega^1(M).
\end{equation}

The skew-symmetric $C^\infty (M)$-bilinear operation $C(\Pi,N)(-,-)$ on the space of $1$-forms is called the Magri--Morosi concomitant  of the Poisson--structure $\Pi$ and the Nijenhuis tensor $N$.\\
A Poisson-Nijenhuis manifold as above is denoted by the triple $(M,\Pi,N)$. If $\Pi$ is non-degenerate, that is, defines a symplectic structure $\omega$ on $M$, then $(M,\omega,N)$ is said to be a symplectic-Nijenhuis manifold.
\end{definition}

\begin{definition}\label{PG}
A Poisson groupoid is a Lie groupoid $G\rightrightarrows M$ equipped with a Poisson structure $\Pi$ on $G$, if $\Pi^{\sharp}:T^\ast G \to TG$ be a Lie groupoid morphism from the cotangent Lie groupoid $T^{\ast} G \rightrightarrows A^\ast G$ to the tangent Lie groupoid $TG \rightrightarrows TM$ over some map $A^\ast G \to TM$, we denote the base map by $\underline \Pi$, it is necessarily a vector bundle morphism.
\end{definition}

\begin{definition}\label{MT}
Let $G \rightrightarrows M$ be a Lie groupoid. A multiplicative $(1,1)$-tensor $\mathbf{N}$ on the groupoid is a pair $(N,N_M)$ of $(1,1)$-tensors on $G$ and $M$, respectively, such that
\begin{center}
\begin{tikzcd}
  TG \arrow[to=2-1, shift right=0.65ex]\arrow[to=2-1, shift left=0.65ex]\arrow[to=1-2, "N"]
  & TG \arrow[to=2-2, shift right=0.65ex]\arrow[to=2-2, shift left=0.65ex] \\
   TM \arrow[to=2-2]\arrow[to=2-2, "N_M"]
  & TM

\end{tikzcd}
\end{center}
is a Lie groupoid morphism from the tangent Lie groupoid to itself. In this case, $N_M$ is the restriction of $N$ to the unit space $M$. Thus, $N_M$ is completely determined by $N$. Hence we may use $N$ to denote a multiplicative $(1,1)$-tensor.
\end{definition}

\begin{definition}\label{NG}
A Nijenhuis groupoid is a Lie groupoid $G\rightrightarrows M$ together with a multiplicative Nijenhuis tensor $\mathbf{N}$ (cf. Definition \ref{MT}). A Nijenhuis groupoid may also be denoted by $(G\rightrightarrows M,\mathbf{N})$.

\end{definition}

\begin{definition}\label{PNG}
A Poisson-Nijenhuis groupoid is a Lie groupoid $G\rightrightarrows M$ together with a Poisson-Nijenhuis structure $(\Pi,N)$ on $G$ such that $(G\rightrightarrows M,\Pi)$ forms a Poisson groupoid and $(G\rightrightarrows M,N)$ a Nijenhuis groupoid. Thus, a Poisson-Nijenhuis groupoid is a Poisson groupoid $(G\rightrightarrows M,\Pi)$ together with a multiplicative Nijenhuis tensor $N:TG \to TG$ such that $(G,\Pi,N)$ is a Poisson-Nijenhuis manifold. A Poisson-Nijenhuis groupoid as above is denoted by the triple $(G\rightrightarrows M,\Pi,N)$.
\end{definition}

\section{Right-invariant Poisson-Nijenhuis structures on Lie groupoids}

\begin{definition}\label{RIPS}
Let $\Pi$ be a Poisson structure on the Lie groupoid $G\rightrightarrows M$ we call $\Pi$ right invariant if there exists a bivector 
$\Lambda \in \Gamma(\wedge^2 AG)$ such that  $\Pi=\overrightarrow{\Lambda}$. 
\end{definition}
For every $\Lambda \in \Gamma(\wedge^2 AG),$ the right invariant vector field on the Lie groupoid
 $\overrightarrow{\Lambda} : G\to \wedge^2 TG$ is defined by 
 \begin{equation}
 \overrightarrow{\Lambda}(g)=\lambda_{ij}(g)\overrightarrow{X_i}\wedge \overrightarrow{X_j}=\lambda_{ij}(g)dR_g(X_i)\wedge dR_g(X_j),
 \end{equation} 
where $\Lambda= \lambda_{ij}(g)X_i\wedge X_j$ and $X_i, X_j \in AG$

In this section, we define a right-invariant Poisson-Nijenhuis Lie groupoid $G\rightrightarrows M$  and its infinitesimal counterpart on the Lie algebroid $AG$ of $G\rightrightarrows M$.
\begin{definition}
  A bivector $\Lambda$ on the Lie algebroid $AG$ is called a Poisson if 
 \begin{equation}\label{PBVL}
 [\Lambda,\Lambda]_{SN}=0,
 \end{equation}
 where $[\cdot,\cdot]_{\SN}$ is the Schouten–Nijenhuis bracket.
 \end{definition}
 \begin{definition}
   An operator $n$ on the Lie algebroid $AG$ is called  a Nijenhuis operator if:
    \begin{equation}\label{NO}
    [n, n] = 0,
    \end{equation}
        where the Nijenhuis torsion $[n,n]$ is defined for $X, Y \in \Gamma(AG)$ by
    \begin{equation}\label{NT}
    [n,n](X,Y) := [nX, nY] - n\big([nX, Y] + [X, nY] - n[X,Y]\big).
    \end{equation}
    \end{definition}
\begin{definition}\label{RIPNS}
A Poisson-Nijenhuis structure $(\Pi,\mathbf{N})$ on a Lie groupoid $G\rightrightarrows M$ is said to be right-invariant, if:
\begin{itemize}
\item[1.] The Poisson structure $\Pi$ is right invariant, i.e., there exists $\Lambda \in \Gamma(\wedge^2 AG)$ such that $\Pi=\overrightarrow{\Lambda}$.
\item[2.] Multiplicative $(1,1)-$tensor $\mathbf{N}=(N,N_M)$ also is right-invariant, i.e., there is a linear endomorphism $n:\Gamma(AG) \to \Gamma(AG)$ 
such that
$$N=\overrightarrow{n}$$
where $ \overrightarrow{n}(X_g)=dR_g \circ n\circ dR_{g^{-1}}(X_g),\; X_g\in T^{\alpha}_g G \; \text{and} \; g\in G$ and $(\overrightarrow{n}, N_M)$ is a multiplicative $(1, 1)-$tensor.

\end{itemize}

\end{definition}
In the following proposition we prove our claims only for $N$, beacuse $N_M$ is completely determined by $N$. This is also true for $n$ and $N_M$.

\begin{proposition}\label{prop1}
Let $(\II, \NN)$ be a \textbf{right-invariant Poisson–Nijenhuis structure} on a Lie groupoid $G \rightrightarrows M$ with Lie algebroid $AG$ and unit manifold $1_M \subset G$. Define:
\begin{itemize}
    \item $\LL \in \Gamma(\wedge^2 AG)$ as the restriction of $\II$ to $1_M$, i.e., $\II|_{1_M} = \vec{\LL}$.
    \item $n: \Gamma(AG) \to \Gamma(AG)$ and $n_M: TM \to TM$ as the restrictions of $\NN = (N, N_M)$ to $1_M$, i.e., $N|_{AG} = \vec{n}$ and $N_M = n_M$.
\end{itemize}
Then the following hold:

\begin{enumerate}
    \item \textbf{$\LL$ is a Poisson bivector on $AG$, (\ref{PBVL}):}
    \[
    [\LL, \LL]_{\SN} = 0,
    \]

    \item \textbf{$n$ is a Nijenhuis operator on $AG$, (\ref{NO}):}
    \[
    [n, n] = 0,
    \]
      \item \textbf{Compatibility between $n$ and $\LL$:}
    \[
    n \circ \LL^{\sharp} = \LL^{\sharp} \circ n^{*},
    \]
    where $\LL^{\sharp}: A^*G \to AG$ is the bundle map induced by $\LL$, and $n^{*}: \Gamma(A^*G) \to \Gamma(A^*G)$ is the dual of $n$.

    \item \textbf{Vanishing of the Magri–Morosi concomitant:}
    \[
    C(\LL, n) = 0,
    \]
    where $C(\LL, n)$ is the Magri–Morosi concomitant (see \cite{Kosman2} and \cite{Magri}).

    \item \textbf{Morphism property:}
    The maps $\vec{\LL}^{\sharp}: T^{*}G \to TG$ and $\overrightarrow{\LL^{\sharp}}: A^{*}G \to AG$ are Lie groupoid morphisms.
\end{enumerate}
\end{proposition}

\begin{proof}
We prove each part in order.

\paragraph{Proof of (1).}
By definition, $\II = \vec{\LL}$, (\ref{RIPS}) is a right-invariant Poisson bivector on $G$. Hence $[\II, \II]_{\SN} = 0$. For right-invariant bivector fields, the Schouten bracket satisfies
\[
[\vec{\LL}, \vec{\LL}]_{\SN} = \overrightarrow{[\LL, \LL]}_{\SN},
\]
because the Lie bracket of right-invariant vector fields corresponds to the Lie bracket on $\Gamma(AG)$ and the Schouten bracket respects right-invariance. Therefore
\[
\overrightarrow{[\LL, \LL]}_{\SN} = 0 \quad \Longrightarrow \quad [\LL, \LL]_{\SN} = 0.
\]

\paragraph{Proof of (2).}
Since $\NN = (N, N_M)$ is a Nijenhuis tensor on $G$, its Nijenhuis torsion vanishes: $[N, N] = 0$. For right-invariant vector fields, one has the relation (See \cite{Kiril0Mackenzie})
\[
[N, N](\vec{X}, \vec{Y}) = \overrightarrow{[n, n]}(X, Y) \qquad \forall X, Y \in \Gamma(AG).
\]
(A direct computation using the definition of the Nijenhuis torsion and the fact that $N(\vec{X}) = \vec{n(X)}$ yields this identity up to a sign; vanishing of one side implies vanishing of the other.) Hence
\[
\overrightarrow{[n, n]} = 0 \quad \Longrightarrow \quad [n, n] = 0.
\]
Thus $n$ is a Nijenhuis operator on $AG$.

\paragraph{Proof of (3).}
The compatibility condition $N \circ \II^{\sharp} = \II^{\sharp} \circ N^{*}$ holds on $G$ because $(\II, \NN)$ is a Poisson–Nijenhuis structure. Restricting this equality to the units $1_M \subset G$ and using right-invariance, we obtain
\[
\vec{n} \circ \vec{\LL}^{\sharp} = \vec{\LL}^{\sharp} \circ \vec{n}^{*} \quad \Longrightarrow \quad \overrightarrow{n \circ \LL^{\sharp}} = \overrightarrow{\LL^{\sharp} \circ n^{*}}.
\]
Since the map $X \mapsto \vec{X}$ is an isomorphism from $\Gamma(AG)$ to the space of right-invariant vector fields on $G$, we conclude
\[
n \circ \LL^{\sharp} = \LL^{\sharp} \circ n^{*}.
\]

\paragraph{Proof of (4).}
The conditions $[\LL, \LL]_{\SN} = 0$, $[n, n] = 0$, and $n \circ \LL^{\sharp} = \LL^{\sharp} \circ n^{*}$ are precisely the defining properties of a Poisson–Nijenhuis structure on the Lie algebroid $AG$ (see \cite{Kosman2} and \cite{Magri}). For such a structure, the Magri–Morosi concomitant vanishes identically, i.e., $C(\LL, n) = 0$.

\paragraph{Proof of (5).}
The map $\II^{\sharp}: T^{*}G \to TG$ is a Lie groupoid morphism by definition of a Poisson groupoid (see \cite{Kiril0Mackenzie2}). Since $\II = \vec{\LL}$, right-invariance implies that $\II^{\sharp}$ preserves right-invariant sections. Hence its restriction to the units gives a Lie algebroid morphism $\LL^{\sharp}: A^{*}G \to AG$. The same argument applied to $\vec{\LL}^{\sharp}$ and $\overrightarrow{\LL^{\sharp}}$ 
yields that these maps are Lie groupoid morphisms.
\end{proof}

\vspace{1cm}


\begin{theorem}
\label{thm:correspondence}
Let $G \rightrightarrows M$ be an $s$-connected and $s$-simply connected Lie groupoid with Lie algebroid $AG$. Then there is a \textbf{one-to-one correspondence} between:

\begin{enumerate}
    \item \textbf{Right-invariant Poisson–Nijenhuis structures} $(\II, \NN)$ on $G$, where $\II$ is a Poisson bivector and $\NN = (N, N_M)$ is a multiplicative Nijenhuis tensor compatible with $\II$, i.e., 
    \[
    N \circ \II^{\sharp} = \II^{\sharp} \circ N^{*} \qquad \text{(equivalently } C(\II, N) = 0\text{)},
    \]
    and

    \item \textbf{$(\LL, n)$-structures on $AG$} satisfying:
    \begin{itemize}
        \item $\LL \in \Gamma(\wedge^2 AG)$ is a Poisson bivector: $[\LL, \LL]_{\SN} = 0$,
        \item $n: AG \to AG$ is a Nijenhuis operator: $[n, n] = 0$,
        \item Compatibility: $n \circ \LL^{\sharp} = \LL^{\sharp} \circ n^{*}$,
        \item (The base map $n_M: TM \to TM$ is determined by $\rho \circ n = n_M \circ \rho$, where $\rho: AG \to TM$ is the anchor.)
    \end{itemize}
\end{enumerate}

The correspondence is given by:

\[
\II = \overrightarrow{\LL}, \qquad N = \overrightarrow{n}, \qquad N_M = n_M,
\]

i.e., the structures on $G$ are the right-invariant lifts of the structures on $AG$.
\end{theorem}

\begin{proof}
We prove the two directions separately.

\paragraph{Direction 1: From $G$ to $AG$ (Proposition 12).}
Given a right-invariant Poisson–Nijenhuis structure $(\II, \NN)$ on $G$, define $\LL$ and $n$ by restricting $\II$ and $N$ to the unit manifold $1_M \subset G$. Since $\II$ and $N$ are right-invariant, this yields well-defined sections $\LL \in \Gamma(\wedge^2 AG)$ and $n: \Gamma{(AG)} \to \Gamma{(AG)}$. Proposition 12 shows that $[\LL, \LL]_{\SN} = 0$, $[n, n] = 0$, and $n \circ \LL^{\sharp} = \LL^{\sharp} \circ n^{*}$. Hence we obtain a $(\LL, n)$-structure on $AG$. This direction requires no connectedness assumptions.

\paragraph{Direction 2: From $AG$ to $G$ (Integration).}
Conversely, suppose we have a $(\LL, n)$-structure on $AG$ satisfying the three conditions. Since $G$ is $s$-connected and $s$-simply connected, the Lie functor from Lie groupoids to Lie algebroids is an equivalence when restricted to $s$-simply connected groupoids (see \cite{Kiril0Mackenzie}). Therefore:

\begin{enumerate}
    \item \textbf{Integration of $\LL$:} By the integration theorem for Poisson groupoids (\cite{Kiril0Mackenzie}), the Poisson bivector $\LL$ on $AG$ integrates uniquely to a multiplicative Poisson bivector $\II$ on $G$ such that $\II = \overrightarrow{\LL}$. The $s$-connectedness of $G$ guarantees that the exponential map is surjective, and $s$-simply connectedness guarantees uniqueness.

    \item \textbf{Integration of $n$:} By the integration theorem for multiplicative tensors (\cite{Bursztyn}), the Nijenhuis operator $n: AG \to AG$ integrates uniquely to a multiplicative $(1,1)$-tensor $N$ on $G$ such that $N = \overrightarrow{n}$. The base map $N_M: TM \to TM$ is determined by $N_M = \rho \circ n \circ \rho^{-1}$, which is well-defined because $\rho$ is surjective on the base.

    \item \textbf{Compatibility:} The compatibility condition $n \circ \LL^{\sharp} = \LL^{\sharp} \circ n^{*}$ on $AG$ differentiates to $N \circ \II^{\sharp} = \II^{\sharp} \circ N^{*}$ on $G$ (by right-invariance and the fact that differentiation preserves the composition of morphisms). Hence $(\II, \NN)$ is a Poisson–Nijenhuis structure on $G$, where $\NN = (N, N_M)$.
\end{enumerate}

Thus the two constructions are inverses of each other, establishing a bijection.
\end{proof}

\begin{remark}
The $s$-connected and $s$-simply connected assumptions are essential for the integration direction (from $AG$ to $G$). Without $s$-connectedness, some $(\LL, n)$-structures may not integrate to global structures on $G$. Without $s$-simply connectedness, the integration may not be unique (different PN-structures on $G$ could differentiate to the same $(\LL, n)$ on $AG$). For the direction from $G$ to $AG$ (Proposition 12), no such assumptions are needed.
\end{remark}

\vspace{0.5cm}

\begin{corollary}
\label{cor:classification}
Let $G \rightrightarrows M$ be an $s$-connected and $s$-simply connected Lie groupoid with Lie algebroid $AG$. Then the classification of "right-invariant Poisson–Nijenhuis structures" on $G$ is equivalent to the classification of "$(\Lambda, n)$-structures" on $AG$ satisfying:
\begin{itemize}
    \item $[\Lambda, \Lambda]_{\SN} = 0$,
    \item $[n, n] = 0$,
    \item $n \circ \Lambda^{\sharp} = \Lambda^{\sharp} \circ n^{*}$,
\end{itemize}
where $\Lambda \in \Gamma(\wedge^2 AG)$ and $n: \Gamma{(AG)} \to \Gamma{(AG)}$ is a vector bundle morphism covering $n_M: TM \to TM$.

Moreover, the correspondence is given explicitly by:
\[
\Pi = \overrightarrow{\Lambda}, \qquad N = \overrightarrow{n}, \qquad N_M = n_M.
\]

In other words, for a fixed Lie groupoid $G$ satisfying the above assumptions, there is a one-to-one correspondence:
\[
\left\{
\begin{array}{c}
\text{Right-invariant Poisson--Nijenhuis} \\
\text{structures on } G
\end{array}
\right\}
\quad \longleftrightarrow \quad
\left\{
\begin{array}{c}
(\Lambda, n)\text{-structures on } AG \\
\text{satisfying the three conditions}
\end{array}
\right\}.
\]
Thus, the classification problem on the global object $G$ reduces to a purely infinitesimal (often algebraic) classification problem on its Lie algebroid $AG$.
\end{corollary}

\begin{proof}
This follows directly from Theorem 13. The forward direction (from $G$ to $AG$) is given by Proposition 12, and the backward direction (from $AG$ to $G$) follows from the integration theorem for Poisson groupoids and multiplicative tensors, using the $s$-connected and $s$-simply connected assumptions. The bijectivity of the correspondence ensures that distinct $(\Lambda, n)$-structures lift to distinct PN-structures on $G$, and vice versa.
\end{proof}

\begin{center}{Correspondence between global and infinitesimal data.}
\begin{tabular}{|c|c|}
\hline
\textbf{Global (Lie groupoid $G$)} & \textbf{Infinitesimal (Lie algebroid $AG$)} \\
\hline
Right-invariant Poisson bivector $\Pi$ & Poisson bivector $\Lambda$ with $[\Lambda,\Lambda]=0$ \\
\hline
Right-invariant Nijenhuis tensor $N$ & Nijenhuis operator $n$ with $[n,n]=0$ \\
\hline
Compatibility $N \circ \Pi^\sharp = \Pi^\sharp \circ N^*$ & Compatibility $n \circ \Lambda^\sharp = \Lambda^\sharp \circ n^*$ \\
\hline
\end{tabular}
\end{center}

\section{Examples}

\begin{example}[Trivial Lie Groupoid with Identity Nijenhuis Tensor]
\label{ex:trivial}

Let $M$ be a smooth manifold and $G$ a Lie group. The \textbf{trivial Lie groupoid} is defined as $\Upsilon := M \times G \times M \rightrightarrows M$ with the following structure maps:

\begin{itemize}
    \item Source: $\alpha: \Upsilon \to M$, $(x, a, y) \mapsto x$.
    \item Target: $\beta: \Upsilon \to M$, $(x, a, y) \mapsto y$.
    \item Multiplication: $m: \Upsilon_2 \to \Upsilon$, $(x, a, y) \cdot (y, b, z) = (x, ab, z)$.
    \item Unit: $1: M \to \Upsilon$, $x \mapsto (x, e, x)$, where $e \in G$ is the identity.
    \item Inverse: $\iota: \Upsilon \to \Upsilon$, $(x, a, y) \mapsto (y, a^{-1}, x)$.
\end{itemize}

The Lie algebroid associated to $\Upsilon$ is $A\Upsilon = TM \oplus (M \times \mathfrak{g})$, where $\mathfrak{g}$ is the Lie algebra of $G$. The anchor $\rho: A\Upsilon \to TM$ is the projection onto $TM$, and the Lie bracket on sections is given by the usual Lie bracket of vector fields on $M$ combined with the Lie bracket on $\mathfrak{g}$.

Now, let $M$ be a Poisson manifold with Poisson bivector $\Pi_M$, and let $G$ be a Poisson–Lie group with Poisson bivector $\Pi_G$. Consider the bivector $\II$ on $\Upsilon = M \times G \times M$ defined by
\[
\II_{(x,a,y)} := \Pi_M(x) + \Pi_G(a) - \Pi_M(y),
\]
where the signs are chosen to ensure multiplicativity (see \cite{A0Weinstein}). With this choice, $(\Upsilon, \II)$ is a Poisson groupoid.

Let $\NN = (N, N_M)$ be the trivial multiplicative $(1,1)$-tensor given by the identity:
\[
N := id_{T\Upsilon}, \qquad N_M := id_{TM}.
\]
Since $N = id$, the Nijenhuis torsion $[N, N]$ vanishes trivially. Moreover, the compatibility condition $N \circ \II^\sharp = \II^\sharp \circ N^*$ holds because $N = id$ and $N^* = id$. Hence $C(\II, N) = 0$. Therefore $(\Upsilon, \II, \NN)$ is a Poisson–Nijenhuis Lie groupoid.

By Proposition 12, the infinitesimal counterpart on $A\Upsilon = TM \oplus (M \times \mathfrak{g})$ is given by
\[
\LL = \Pi_M \oplus \Pi_G \oplus (-\Pi_M), \qquad n = id_{A\Upsilon},
\]
which satisfies $[\LL, \LL] = 0$, $[n, n] = 0$, and $n \circ \LL^\sharp = \LL^\sharp \circ n^*$.
\end{example}

\vspace{0.5cm}

\begin{example}[Pair Groupoid with Identity Nijenhuis Tensor]
\label{ex:pair}

Let $M$ be a smooth manifold. The \textbf{pair groupoid} is defined as $M \times M \rightrightarrows M$ with the following structure maps:
\begin{itemize}
    \item Source: $\alpha(x, y) = x$.
    \item Target: $\beta(x, y) = y$.
    \item Multiplication: $(x, y) \cdot (y, z) = (x, z)$.
    \item Unit: $1(x) = (x, x)$.
    \item Inverse: $(x, y)^{-1} = (y, x)$.
\end{itemize}

The Lie algebroid of the pair groupoid is $TM$ with the identity map as anchor and the usual Lie bracket of vector fields.

Now, let $(M, \Pi_M)$ be a Poisson manifold. Define a bivector $\II$ on $M \times M$ by
\[
\II_{(x,y)} := \Pi_M(x) - \Pi_M(y).
\]
It is well known (see \cite{A0Weinstein}, \cite{Kiril0Mackenzie2}) that this defines a multiplicative Poisson structure on the pair groupoid, i.e., $(M \times M, \II)$ is a Poisson groupoid.

Let $\NN = (N, N_M)$ be the trivial multiplicative $(1,1)$-tensor given by the identity:
\[
N := id_{T(M \times M)}, \qquad N_M := id_{TM}.
\]
Clearly, $N$ is a Nijenhuis tensor ($[N, N] = 0$) and satisfies $N \circ \II^\sharp = \II^\sharp \circ N^*$ because $N = id$ and $N^* = id$. Thus $C(\II, N) = 0$, and $(M \times M, \II, \NN)$ is a Poisson–Nijenhuis Lie groupoid.

According to Proposition 12, the corresponding $(\LL, n)$-structure on the Lie algebroid $TM$ is simply
\[
\LL = \Pi_M, \qquad n = id_{TM},
\]
which trivially satisfies $[\Pi_M, \Pi_M] = 0$, $[id, id] = 0$, and $id \circ \Pi_M^\sharp = \Pi_M^\sharp \circ id^*$.
\end{example}

\vspace{0.5cm}

\begin{example}[Trivial Groupoid with Nontrivial Nijenhuis Tensor on the Lie Group]
\label{ex:nontrivial}

Let $M$ be a smooth manifold equipped with the zero Poisson structure $\Pi_M = 0$. Let $G$ be a Lie group equipped with a nontrivial right-invariant Poisson–Nijenhuis structure $(\Pi_G, N_G)$, where:
\begin{itemize}
    \item $\Pi_G$ is a right-invariant Poisson bivector on $G$,
    \item $N_G: TG \to TG$ is a right-invariant Nijenhuis tensor,
    \item They satisfy the compatibility condition $N_G \circ \Pi_G^\sharp = \Pi_G^\sharp \circ N_G^*$.
\end{itemize}
Explicit examples of such structures on Lie groups can be found in \cite{HA1}, \cite{Ravanpak} and \cite{Haghighat}.

Consider the trivial Lie groupoid $\Upsilon := M \times G \times M \rightrightarrows M$ as in Example \ref{ex:trivial}. Define a bivector $\II$ on $\Upsilon$ by
\[
\II_{(x,a,y)} := \Pi_G(a),
\]
i.e., $\II$ depends only on the $G$-component and vanishes on the $M$-factors. Define a $(1,1)$-tensor $N$ on $\Upsilon$ by
\[
N_{(x,a,y)} := N_G(a) \quad \text{on } T_{(x,a,y)}\Upsilon,
\]
where we identify $T_{(x,a,y)}\Upsilon \cong T_xM \oplus T_aG \oplus T_yM$ and $N_G$ acts only on the $T_aG$ factor, while acting as the identity on $T_xM$ and $T_yM$. More precisely,
\[
N(v_x, v_a, v_y) := (v_x, N_G(v_a), v_y).
\]
Let $N_M := id_{TM}$ be the identity on the base.

Then:
\begin{enumerate}
    \item \textbf{$\II$ is a Poisson bivector on $\Upsilon$:} Since $\Pi_G$ is Poisson on $G$ and $\Pi_M = 0$, the Schouten bracket $[\II, \II]$ vanishes because it reduces to $[\Pi_G, \Pi_G]$ on the $G$-factor and zero elsewhere.

    \item \textbf{$N$ is a Nijenhuis tensor on $\Upsilon$:} Since $N_G$ is a Nijenhuis tensor on $G$ and $N$ acts as the identity on the $M$-factors, the Nijenhuis torsion $[N, N]$ vanishes because it reduces to $[N_G, N_G]$ on $G$ and zero elsewhere.

    \item \textbf{Compatibility:} On the $G$-factor, we have $N_G \circ \Pi_G^\sharp = \Pi_G^\sharp \circ N_G^*$ by assumption. On the $M$-factors, both sides vanish because $\Pi_M = 0$ and $N$ acts as the identity. Hence
    \[
    N \circ \II^\sharp = \II^\sharp \circ N^*.
    \]

    \item \textbf{Multiplicativity:} Both $\II$ and $N$ are right-invariant on $\Upsilon$ because $\Pi_G$ and $N_G$ are right-invariant on $G$ and the $M$-factors are unaffected by the groupoid multiplication.
\end{enumerate}

Thus $(\Upsilon, \II, \NN)$ is a right-invariant Poisson–Nijenhuis Lie groupoid with $\NN = (N, N_M)$.

\subsection*{Infinitesimal Counterpart}

The Lie algebroid of $\Upsilon$ is $A\Upsilon = TM \oplus (M \times \mathfrak{g})$, where $\mathfrak{g} = \operatorname{Lie}(G)$. The infinitesimal data $(\LL, n)$ on $A\Upsilon$ is given by:
\[
\LL = \pi_G, \qquad n = (id_{TM}, \nu),
\]
where $\pi_G \in \Gamma(\wedge^2 \mathfrak{g})$ is the Poisson bivector on $\mathfrak{g}$ corresponding to $\Pi_G$, and $\nu: \mathfrak{g} \to \mathfrak{g}$ is the linear Nijenhuis operator corresponding to $N_G$ (see \cite{Haghighat}, \cite{Ravanpak}). Explicitly:
\[
\LL_{(x, \xi)} = \pi_G(\xi), \quad n_{(x, \xi)}(X, \xi) = (X, \nu(\xi)), \quad \forall x \in M, \xi \in \mathfrak{g}.
\]

One verifies directly:
\begin{itemize}
    \item $[\LL, \LL] = 0$ because $[\pi_G, \pi_G] = 0$,
    \item $[n, n] = 0$ because $[\nu, \nu] = 0$,
    \item $n \circ \LL^\sharp = \LL^\sharp \circ n^*$ because $\nu \circ \pi_G^\sharp = \pi_G^\sharp \circ \nu^*$.
\end{itemize}

Hence $(\LL, n)$ satisfies the conditions of Theorem 13, and by the $s$-connected and $s$-simply connected assumption on $\Upsilon$ (which holds if $G$ is connected and simply connected), there is a one-to-one correspondence with the groupoid structure we constructed.
\end{example}

\vspace{0.5cm}

\begin{remark}
Example \ref{ex:trivial} and Example \ref{ex:pair} illustrate the correspondence with the simplest choice $N = id$. Example \ref{ex:nontrivial} shows that nontrivial Poisson–Nijenhuis structures exist and fit into the correspondence framework. In particular, starting from a nontrivial structure on a Lie group $G$, we obtain a nontrivial structure on the trivial groupoid $M \times G \times M$ by extending trivially over the base manifold $M$. The base Poisson structure $\Pi_M = 0$ ensures that the $M$-factors do not interfere with the compatibility conditions.
\end{remark}

\section{Conclusion}

In this paper, we introduced right-invariant Poisson–Nijenhuis structures on Lie groupoids and established a one-to-one correspondence with $(\Lambda, n)$-structures on their Lie algebroids (Theorem 13). As a consequence, Corollary \ref{cor:classification} provides a complete classification of such structures on $s$-connected and $s$-simply connected Lie groupoids in terms of infinitesimal algebroid data.

\paragraph{Physical interpretation.}
From a physical perspective, a right-invariant Poisson–Nijenhuis structure $(\Pi, N)$ on a Lie groupoid $G$ encodes a \textbf{bi-Hamiltonian system} on a symmetric configuration space. The Poisson bivector $\Pi$ defines a Hamiltonian structure, while the Nijenhuis tensor $N$ acts as a \textbf{recursion operator} generating a second compatible Poisson structure $\Pi' = N\Pi$. Such bi-Hamiltonian systems are known to be classically integrable, possessing an infinite hierarchy of conservation laws.

Our classification therefore states that integrable bi-Hamiltonian systems on an $s$-connected and $s$-simply connected Lie groupoid $G$ are classified by infinitesimal data $(\Lambda, n)$ on the Lie algebroid $AG$ satisfying:
\[
[\Lambda, \Lambda] = 0, \qquad [n, n] = 0, \qquad n \circ \Lambda^{\sharp} = \Lambda^{\sharp} \circ n^{*}.
\]

\paragraph{Significance and future directions.}
This result simplifies the classification of integrable systems on symmetric configuration spaces to an algebraic problem on the Lie algebroid, and conversely provides a systematic method for constructing new integrable systems from algebroid data. Future work includes extending the classification to non-invariant structures, studying deformations, and applying the framework to concrete physical systems such as gauge theories and lattice models.


\vspace{1cm}

\end{document}